\newcommand{\bra}[1]{{\left\langle{#1}\right\vert}}
\newcommand{\ket}[1]{{\left\vert{#1}\right\rangle}}
\newcommand{\qw}[1][-1]{\ar @{-} [0,#1]}
\newcommand{\qwx}[1][-1]{\ar @{-} [#1,0]}
\newcommand{\gate}[1]{*+<.6em>{#1} \POS ="i","i"+UR;"i"+UL **\dir{-};"i"+DL **\dir{-};"i"+DR **\dir{-};"i"+UR **\dir{-},"i" \qw}
\newcommand{\meter}{*=<1.8em,1.4em>{\xy ="j","j"-<.778em,.322em>;{"j"+<.778em,-.322em> \ellipse ur,_{}},"j"-<0em,.4em>;p+<.5em,.9em> **\dir{-},"j"+<2.2em,2.2em>*{},"j"-<2.2em,2.2em>*{} \endxy} \POS ="i","i"+UR;"i"+UL **\dir{-};"i"+DL **\dir{-};"i"+DR **\dir{-};"i"+UR **\dir{-},"i" \qw}
\newcommand{\measureD}[1]{*{\xy*+=<0em,.1em>{#1}="e";"e"+UR+<0em,.25em>;"e"+UL+<-.5em,.25em> **\dir{-};"e"+DL+<-.5em,-.25em> **\dir{-};"e"+DR+<0em,-.25em> **\dir{-};{"e"+UR+<0em,.25em>\ellipse^{}};"e"+C:,+(0,1)*{} \endxy} \qw}
\newcommand{\control}{*!<0em,.025em>-=-<.2em>{\bullet}}
\newcommand{\ctrl}[1]{\control \qwx[#1] \qw}
\newcommand{\multigate}[2]{*+<1em,.9em>{\hphantom{#2}} \POS [0,0]="i",[0,0].[#1,0]="e",!C *{#2},"e"+UR;"e"+UL **\dir{-};"e"+DL **\dir{-};"e"+DR **\dir{-};"e"+UR **\dir{-},"i" \qw}
\newcommand{\ghost}[1]{*+<1em,.9em>{\hphantom{#1}} \qw}
\newcommand{\gategroup}[6]{\POS"#1,#2"."#3,#2"."#1,#4"."#3,#4"!C*+<#5>\frm{#6}}
\newcommand{\rstick}[1]{*!L!<-.5em,0em>=<0em>{#1}}
\newcommand{\lstick}[1]{*!R!<.5em,0em>=<0em>{#1}}
\newcommand{\Qcircuit}{\xymatrix @*=<0em>}
\newcommand{\proj}[1]{\ket{#1}\bra{#1}}
\newcommand{\unit}{\mathbf{1}}
\newcommand{\ct}{^{\dagger}}
\newcommand{\tn}[1]{^{\otimes #1}}
\newtheorem{theorem}{Theorem}
\newtheorem{definition}[theorem]{Definition}
\begin{document} 

\title{Nonlocality in instantaneous quantum circuits}
\author{Joel J. Wallman}
\affiliation{%
	Institute for Quantum Computing and Department of Applied Mathematics,
	University of Waterloo, Waterloo, Ontario N2L 3G1, Canada
}%
\author{Emily Adlam}
\affiliation{%
	Institute for Quantum Computing and Department of Applied Mathematics,
	University of Waterloo, Waterloo, Ontario N2L 3G1, Canada
}%
\affiliation{Centre for Quantum Information and Foundations, DAMTP,
Centre for Mathematical Sciences, University of Cambridge,
Wilberforce Road, Cambridge, CB3 0WA, U.K.}

\pacs{03.67.-a, 03.67.Lx, 03.65.Ta, 03.67.Ac}

\date{\today}

\begin{abstract}
We show that families of Instantaneous Quantum Polynomial (IQP) circuits 
corresponding to nontrivial Bell tests exhibit nonlocality. However, we also 
prove that this nonlocality can only be demonstrated using post-selection or 
nonlinear processing of the measurement outcomes. Therefore if the output 
of a computation is encoded in the parity of the measurement outcomes, then 
families of IQP circuits whose full output distributions are hard to sample 
still only provide a computational advantage relative to locally causal 
theories under post-selection. Consequently, post-selection is a crucial 
technique for obtaining a computational advantage for IQP circuits (with 
respect to decision problems) and for demonstrating nonlocality within IQP 
circuits, suggesting a strong link between these phenomena.
\end{abstract}

\maketitle

It is commonly believed that quantum computers will provide an exponential speedup compared to classical computers for certain information processing tasks, such as factoring numbers~\cite{Shor1999} and simulating quantum systems~\cite{Feynman1982}. This possibility raises two natural questions: firstly, can such a computational speedup be proven rigorously, and secondly, what quantum mystery (or mysteries) is responsible for this speedup?

A number of recent results have provided rigorous evidence that quantum systems 
can indeed be exponentially hard to simulate on a classical computer under 
certain ``generic and foundational''~\cite{Aaronson2005a} assumptions about 
classical computational complexity~\cite{Terhal2004,Aaronson2005a}.

Given the evidence for such a computational advantage, it is of great practical and foundational importance to identify the exact ingredients required for a quantum computer to outperform any classical computer. A first step in this direction is to identify maximal sets of quantum operations that are efficiently simulable on a classical computer so that the resources providing a quantum advantage then residing outside such sets. The canonical result along these lines is the Gottesman-Knill theorem, which shows that circuits consisting of stabilizer operations can be efficiently simulated~\cite{Aaronson2004}. For odd-prime dimensional systems, the Gottesman-Knill theorem can be generalized to allow for all operations with nonnegative discrete Wigner function~\cite{Mari2012,Veitch2012}, and allowing any other quantum states then gives a proof of contextuality~\cite{Howard2014}, suggesting that contextuality is the key quantum phenomenon that enables a quantum computational advantage.

An alternative approach is to try and identify small sets of quantum operations 
that provably cannot be efficiently simulated classically (under reasonable 
complexity theoretic assumptions). Of particular interest in this regard is the 
family of Instantaneous Quantum Polynomial (IQP) circuits consisting of 
preparations and measurements in the Pauli-$X$ eigenbasis together with 
unitaries that are diagonal in the Pauli-$Z$ eigenbasis~\cite{Shepherd2009}. 
While the quantum operations in IQP circuits appear simple, the output of 
families of IQP circuits generally cannot be efficiently sampled (i.e., cannot 
be efficiently weakly simulated~\cite{VandenNest2011}) unless the polynomial 
hierarchy collapses to the third level~\cite{Bremner2011}. Since IQP circuits 
are a simple class of circuits which are not classically simulable, they 
provide a clear framework in which to identify the nonclassical phenomenon(a) 
responsible for a quantum computational speedup and the operational 
capabilities required to demonstrate them.

In this spirit, it has recently been shown that a class of IQP circuit families, known as \textbf{IQP}*, retain the property of being difficult to sample while exhibiting no nonlocality (when the circuit is divided into a state-preparation procedure and measurement gadgets as in Fig.~\ref{fig:Bell_test})~\cite{Hoban2014}, since they cannot be used to evaluate nonlinear boolean functions~\cite{Hoban2011b}. 

However, the definition of \textbf{IQP}* is too restrictive for nonlocality to be a relevant property, since it only allows one measurement basis at each site. In order to violate a Bell inequality, multiple measurements per site must be possible, since otherwise the local hidden variable can simply be a string of measurement outcomes sampled from the quantum probability distribution.

In this paper we consider familes of IQP circuits that correspond to nontrivial Bell tests. We show that such families do in fact exhibit nonlocality and are also hard to sample classically. However, we prove that this nonlocality can only be demonstrated using post-selection or nonlinear processing, and so such families of circuits do not provide any computational advantage over local hidden variable theories when classical side-processing is restricted to linear computations. 

\section{IQP circuits} 

We begin by defining IQP circuits and the uniformity conditions that define the associated complexity classes \textbf{IQP} and \textbf{IQP}$^*$.
\begin{definition} 
An $n$-qubit IQP circuit $C_{n,x}$ with input bit-string $x\in\mathbb{Z}_2^n$ 
consists of
\begin{enumerate} 

\item a quantum register prepared in the input state $ |+ \rangle^{\otimes n} $;

\item a unitary operator $U_{n,x}$ applied to the register, where $U_{n,x}$ is 
diagonal with respect to the Pauli-$Z$ eigenbasis operators; and

\item a Pauli-X basis measurement on a subset of the qubits. 

\end{enumerate} 

\end{definition} 

IQP circuits are so-called because $U_{n,x}$ may be decomposed into a product of diagonal gates, which commute and therefore may be applied in any order, or, in particular, simultaneously.

In order to ensure that additional computational power is not hidden in the 
circuit descriptions, it is important to enforce a \emph{uniformity condition} 
ensuring that descriptions of circuits can be efficiently generated 
classically. The original (and most permissive) uniformity condition proposed 
for a family $\{C_{N,x}:n\in\mathbb{N},x\in\mathbb{Z}_2^n\}$ of IQP to be in 
the complexity class \textbf{IQP} is that the unitaries $U_{n,x}$ can all be 
efficiently (with respect to $n$) described by a classical computer. The 
complexity class \textbf{IQP}$^*$ was defined in Ref.~\cite{Hoban2014} by 
further requiring that $U_{n,x}=U_n Z[x]$ for some diagonal $U_n$, where $Z[x] 
= \otimes_{m=1}^n Z^{x_m}$. An equivalent statement of the \textbf{IQP}$^*$ 
uniformity condition is that the input state is $\bigotimes_{m=1}^n 
\ket{-^{x_m}}$ and that $U_n$ is independent of $x$, which is directly 
analogous to the uniformity condition for $\textbf{BPP}$ (i.e., universal 
classical computation).

The fact that the outputs of circuit families in \textbf{IQP} and \textbf{IQP}$^*$ are hard to sample classically rests upon the Hadamard post-selection gadget depicted in Fig.~\ref{fig:HadamardGadget}, since the $\tfrac{\pi}{8}$-gate $R(\tfrac{\pi}{4})$, where $R(\theta) = \proj{0} + e^{i\theta}\proj{1}$, and the controlled-$Z$ gate $\Delta(Z) = \proj{0}\otimes\unit + \proj{1}\otimes Z$ are diagonal gates (and so can be implemented in IQP circuits) and $\{H,R(\tfrac{\pi}{4}),CZ\}$ is a universal gate set.

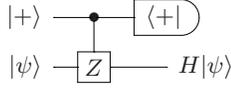
\begin{figure}[!t]
	\begin{tabular}{c c c}
		\Qcircuit @C=1em @R=.7em {\lstick{\ket{+}}
		& \ctrl{1} &  \measureD{\bra{+}} \\  \lstick{\ket{\psi}}
		& \gate{Z} & \rstick{ H\ket{\psi }} \qw
	}
	\end{tabular}
	\caption{The Hadamard post-selection gadget. Post-selected on the outcome of the Pauli-$X$ measurement on the top line being $\bra{+}$, the state $\ket{\psi}$ is mapped to $H\ket{\psi}$, where $H$ is the Hadamard gate.}
	\label{fig:HadamardGadget}
\end{figure}

It was shown in Ref.~\cite{Hoban2014} that a set of circuits $\{C_{n,x}:x\in\mathbb{Z}_2^n\}$ for fixed $n\in\mathbb{N}$ satisfying the uniformity condition for \textbf{IQP}$^*$ could never violate a Bell inequality. The proof of this explicitly rests on the fact that if $U_{n,x} = U_n Z[x]$, then the $Z$ gate can be commuted to the end of the circuit and then absorbed into the measurement, which can be simulated by post-processing the measurement outcomes on each qubit individually (i.e., locally).

However, this trivial dependence on the input, while well-motivated by 
classical analogy, precludes the possibility of any nontrivial dependence of 
the computation on the input, rendering computations in \textbf{IQP}$^*$ 
automatically consistent with a local hidden variable theory. To resolve this 
problem, we now define an IQP Bell test in a way that allows violations of Bell 
inequalities. The following definition is motivated by the depiction of a Bell 
test and its equivalent circuit representation in Fig.~\ref{fig:Bell_test}.

\begin{definition} 
	An IQP Bell test is a family of IQP circuits 
	$\{C_{n,x}:x\in\mathbb{Z}_2^n\}$ for a fixed $n\in\mathbb{N}$ such that 
	$U_{n,x} = U_n \bigotimes_{m=1}^n D_m^{x_m}$, where $U_n$ is a diagonal 
	gate and $\{D_m:m=1,\ldots,n\}$ are single-qubit diagonal gates.
\end{definition}

We now prove that IQP Bell tests can be nontrivial Bell tests, that is, that 
there exist IQP Bell tests that are inconsistent with any local hidden-variable 
theory.

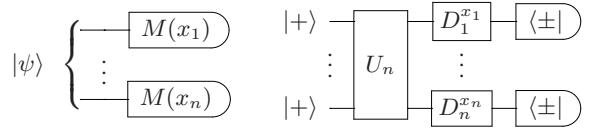
\begin{figure}[!t]
	\begin{minipage}[b]{0.45\linewidth}
	\begin{tabular}{c c c c}
		\Qcircuit @C=1em @R=.7em {
		& &\qw &  \measureD{M(x_1)} \\
			\lstick{\ket{\psi}} & &\vdots & \\
			& &\qw & \measureD{M(x_n)} \gategroup{1}{2}{3}{2}{0.7em}{\{}
		}
	\end{tabular}
	\end{minipage}
	\hspace{1mm}
	\begin{minipage}[b]{0.45\linewidth}
		\begin{tabular}{c c c c}
			\Qcircuit @C=1em @R=.7em {
				\lstick{\ket{+}} & \multigate{3}{U_n} &\gate{D_1^{x_1}}&  
				\measureD{\bra{\pm}} \\
				\vdots &  & \vdots & \\
				 &  & & \\
			\lstick{\ket{+}} & \ghost{U_n}	& \gate{D_n^{x_n}} & 
			\measureD{\bra{\pm}}
			}
		\end{tabular}
	\end{minipage}
	\caption{(left) Depiction of an $n$-qubit Bell state, wherein an $n$-qubit 
	state $\ket{\psi}$ is distributed amongst $n$ space-like separated parties. 
	The $m$th party performs one of two measurements parametrized by $x_m$, and 
	returns the outcome $z_m$ to the verifier. Iterating the experiment many 
	times, the verifier obtains joint probabilities $p(z|x)$ and can then 
	determine whether the statistics are consistent with a local 
	hidden-variable theory by testing whether the statistics violate a Bell 
	inequality. (right) An equivalent reformulation of the Bell scenario into a 
	set of circuits, where $U_n$ acting on $\ket{+}\tn{n}$ can be regarded as a 
	state-preparation gadget and the measurement bases are 
	$\{D_m^{-x_m}\ket{\pm}\}$.}
	\label{fig:Bell_test}
\end{figure}

\begin{theorem}\label{thm:IQP-Bell} 
	There exist IQP Bell tests that are inconsistent with any local hidden-variable theory.
\end{theorem}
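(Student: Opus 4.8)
The plan is to exhibit a single explicit family $\{C_{n,x}\}$ together with a Bell inequality that its output statistics violate. The first move is a structural reduction. Since $U_n$ and all the $D_m$ are diagonal in the $Z$ eigenbasis they mutually commute, so the state presented to the Pauli-$X$ measurement on input $x$ is $\bigotimes_m D_m^{x_m}\ket{\psi}$ with $\ket{\psi}=U_n\ket{+}\tn{n}$ a fixed \emph{phase state} (all computational amplitudes of equal modulus). Equivalently, party $m$ measures the observable $(D_m^{x_m})\ct X\,D_m^{x_m}$: the fixed operator $X$ when $x_m=0$, and an observable $\cos\theta_m X-\sin\theta_m Y$ in the $X$--$Y$ (equatorial) plane when $x_m=1$, writing $D_m=R(\theta_m)$. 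Thus an IQP Bell test is \emph{exactly} a phase state probed with two equatorial settings per qubit, one of which is frozen at $X$. I would then record the joint correlator in closed form as $E(\vec\phi)=2^{-(n-1)}\sum_b\cos(g_b+\sum_m\phi_m(-1)^{b_m})$, where the sum runs over complementary pairs $\{b,\bar b\}$, the offsets $g_b$ are fixed by the phases of $U_n$, and $\phi_m\in\{0,\theta_m\}$ encodes the setting. This turns the whole search into a finite optimization over the $g_b$ (i.e.\ over $U_n$) and the angles $\theta_m$.

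The second step is to choose the resource and angles wisely, because the cheap options fail. For $n=2$ the equatorial correlation matrix of any phase state has rank one, so the CHSH value is capped at $2$; hence at least three qubits are needed. Moreover a direct optimization of the formula above shows that the natural three-qubit Mermin functionals only \emph{saturate} the local bound of $2$ for phase states, so the GHZ/Mermin mechanism that makes stabilizer states nonlocal is unavailable here — consistent with the fact that graph states carry too few equatorial stabilizer correlators (one typically finds a single nonvanishing three-body $X/Y$ correlator, never the four needed). I would therefore take $U_n$ to contain doubly controlled phase gates (a non-stabilizer, hypergraph-type state built from $\Delta(Z)$ and $CCZ$), which produces nonvanishing single-qubit marginals and genuinely mixed equatorial correlators, and then tune the $\theta_m$ to maximize a candidate Bell functional.

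To certify a violation I would display the numbers $E(\vec s)$ (and the surviving marginals) at the optimal angles and then either exhibit a separating Bell inequality or run the finite linear program deciding membership in the local correlation polytope for the chosen $n$; because the equatorial correlations have limited visibility, the operative facet is likely not a pure full-correlation inequality but one that also involves single- and two-party marginals. Equivalently, one may invoke the correspondence between such measurement-based scenarios and the (in)ability to evaluate a nonlinear Boolean function, exhibiting a phase state whose equatorial measurement statistics lie outside the linearly-computable regime.

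The main obstacle is precisely the rigidity forced by diagonality: both settings are confined to the equatorial plane and one is pinned to $X$, which eliminates the canonical maximally violating configurations (needing non-coplanar or $Z$ measurements), neutralizes every stabilizer-state argument, and — as noted — kills $n=2$ outright. The real difficulty is that the obvious candidates saturate rather than exceed the local bound, so the crux is to find a resource and angles for which some inequality is \emph{strictly} violated. I expect this to require a non-stabilizer phase state, carefully optimized angles, and possibly several qubits together with a marginal-sensitive inequality; establishing that strict inequality, and not mere saturation, is where the substance of the proof lies.
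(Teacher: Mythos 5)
Your reduction to phase states probed with equatorial observables is correct, and your diagnosis that direct full-correlation (WWZB/Mermin-type) functionals can only be saturated, never strictly violated, agrees with the paper's Theorem~\ref{thm:require_postselection}. But that diagnosis should have pointed you \emph{toward} post-selection rather than away from it, and the proposal never closes the gap it opens: you correctly identify that establishing a strict inequality, not mere saturation, is where the substance of the proof lies, and then leave exactly that substance to an unspecified numerical search over non-stabilizer ($CCZ$-built) phase states, tuned angles, and marginal-involving facets. No explicit state, settings, or separating inequality is produced, and no argument is given that the search succeeds; as written this is a research plan whose hard step is still open.

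The paper's route is more economical and sidesteps the obstruction you impose on yourself. It takes a five-qubit circuit built only from controlled-$Z$ gates and a single-qubit phase gate (a graph-type phase state, no $CCZ$ required), measures qubits 1 and 2 in the Pauli-$X$ basis and post-selects on $z_1=z_2=0$, which distills the GHZ state $\ket{000}-\ket{111}$ on qubits 3--5, and then runs the Anders--Browne/GHZ argument: the post-selected parity $z_3\oplus z_4\oplus z_5$ equals the nonlinear function $x_3x_4\oplus 1$ with certainty, whereas measurement independence forces any deterministic local hidden-variable model to output a function of $(x_3,x_4)$ that is linear over $\mathbb{Z}_2$, giving $\sum_{x_3,x_4}{\rm Pr}(z=x_3x_4\oplus 1)\leq 3$ against the quantum value $4$. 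Note that the theorem only asserts the \emph{existence} of an IQP Bell test inconsistent with local hidden variables; it does not demand that the violation be post-selection-free (and Theorem~\ref{thm:require_postselection} shows it cannot be, under linear processing). Post-selection is itself the nonlinear processing you were reaching for with ``marginal-sensitive'' inequalities---the paper even rewrites its post-selected test as an un-post-selected inequality in the product $\bar z_1\bar z_2$---so by excluding it at the outset you made the problem strictly harder than the statement requires and then did not solve it.
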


\begin{proof}
Consider the circuit depicted in Fig.~\ref{fig:GHZgadget}, which can be regarded as a post-selection gadget for preparing GHZ states. After all the gates are applied, the five qubits are in the state
\begin{align}\label{eq:distributed_state}
\ket{\psi}&=\ket{++}_{12}(\ket{000}_{345} - \ket{111}_{345})\notag\\
&\quad + \ket{+-}_{12}(\ket{001}_{345} - \ket{110}_{345})\notag\\
&\quad + \ket{-+}_{12}(\ket{100}_{345} - \ket{011}_{345})\notag\\
&\quad + \ket{--}_{12}(\ket{010}_{345} - \ket{101}_{345}).
\end{align}

We can then distribute the five qubits amongst five space-like separated parties. The first two parties measure in the Pauli-$X$ basis, while the other parties measure in the Pauli-$X$ basis if $x_m = 0$ and the Pauli-$Y$ basis if $x_m = 1$. The measurement outcomes $\{z_m:m=3,\ldots,5\}$ can readily be shown to satisfy~\cite{Anders2009}
\begin{align}
	z= z_3 \oplus z_4 \oplus z_5 = \delta_{x_3 x_4\oplus 1}
\end{align}
when post-selected on $z_1=z_2 = 0$ and $x_5 = x_3 \oplus x_4$, where the outcome $0$ corresponds to $\ket{+}$ or $\ket{i}$ respectively and $\oplus$ denotes addition modulo two.

We now show that
\begin{align}\label{eq:GHZ_Bell}
	\sum_{x_3,x_4} {\rm Pr}(z = x_3 x_4 \oplus 1) \leq 3
\end{align}
is a Bell inequality under this post-selection. The proof is a simple 
modification of those in Ref.~\cite{Hoban2011c}.

In a local hidden variable theory, preparing the state $\ket{\psi}$ in Eq.~\eqref{eq:distributed_state} corresponds to preparing a hidden variable $\lambda\in\Lambda$, which is then (without loss of generality) sent to each party. Without loss of generality, all measurement outcomes can be assumed to be deterministic~\cite{Fine1982} and the outcome of the $m$th party's measurement must be independent of both the measurement setting $x_{m'}$ and outcome $z_{m'}$ for all $m'\neq m$, so the measurement outcomes for $m=3,4,5$ can be written as
\begin{align}
 z_m(\lambda,x|z_1=0,z_2=0) = a_m(\lambda)x_m \oplus b_m(\lambda)
\end{align}
for some functions $a_m,b_m:\Lambda\to \mathbb{Z}_2$.

Consequently, even when post-selecting on $z_1=z_2=0$, the parity of the remaining outcomes is 
\begin{align}
 z(\lambda,x)&= z_3(\lambda,x) \oplus z_4(\lambda,x) \oplus z_5(\lambda,x)\notag\\
 &= a_3(\lambda)x_3 \oplus a_4(\lambda)x_4 \oplus a_5(\lambda)x_5 \oplus b'(\lambda)	\,.
\end{align}
Since $x_3,x_4,x_5$ are independent of $\lambda$ (which is often referred to as measurement independence or, more controversially, as a ``free will'' assumption), post-selecting on $x_5 = x_3\oplus x_4$ gives
\begin{align}
z(\lambda,x) = f_{\lambda}(x_3,x_4)	\,,
\end{align}
where $f_{\lambda}:\mathbb{Z}_2^2\to\mathbb{Z}_2$ is a linear function. We then have
\begin{align}
\sum_{x_3,x_4} {\rm Pr}(z = x_3 x_4 \oplus 1) &= \sum_{x_3,x_4}\int_{\Lambda}\rm{d}\lambda\,\delta_{f_{\lambda}(x_3,x_4)\oplus x_3 x_4 \oplus 1}\notag\\
&\leq \max_f \sum_{x_3,x_4}\delta_{f_{\lambda}(x_3,x_4)\oplus x_3 x_4 \oplus 1}\notag\\
&= 3\,,
\end{align}
where the maximization in the third line is over linear functions $f:\mathbb{Z}_2^2\to\mathbb{Z}_2$ and the measure $\rm{d}\lambda$ is the measure for the preparation procedure under the specified post-selections.
\end{proof}

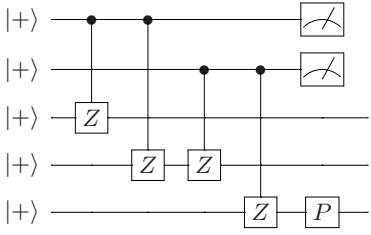
\begin{figure}[!t]
	\begin{tabular}{c c c c c c c}
	\Qcircuit @C=1em @R=.7em {\lstick{\ket{+}}
		& \ctrl{2} & \ctrl{3} & \qw & \qw &  \meter & \\ 
		\lstick{\ket{+}} & \qw & \qw & \ctrl{2} & \ctrl{3} &  \meter & \\ 
		\lstick{\ket{+}} & \gate{Z} & \qw & \qw & \qw & \qw & \qw\\ 
		\lstick{\ket{+}} & \qw & \gate{Z} & \gate{Z} & \qw & \qw & \qw\\ 
		\lstick{\ket{+}} & \qw & \qw & \qw & \gate{Z} & \gate{P} & \qw
	}
	\end{tabular}
	\caption{GHZ preparation gadget. Any measurement outcomes of Pauli-$X$ measurement on the first two lines will produce a maximally entangled state.}
	\label{fig:GHZgadget}
\end{figure}

While there are IQP Bell tests inconsistent with any local hidden variable theory, the above Bell test requires post-selection. We now show that this is a generic feature of any proof of nonlocality for IQP Bell tests. More precisely, we will show that any linear function of the measurement outcomes in an IQP Bell test is consistent with a local hidden variable theory. Some form of nonlinearity, either by explicitly testing a nonlinear function of the measurement outcomes or by post-selecting on desired measurement outcomes, is therefore required to reveal nonlocality in IQP Bell tests. That is, IQP Bell tests exhibit ``hidden nonlocality'' of the form considered elsewhere~\cite{Popescu1995,Hirsch2013}, with the caveat that the nonlocality is only hidden with respect to the allowed processing of measurement outcomes.

To see that post-selection can be viewed as a form of nonlinearity, note that in the example in the proof of the above theorem, post-selecting on the first two outcomes being $z_1=z_2 = 0$ and violating Eq.~\eqref{eq:GHZ_Bell} is equivalent to
\begin{align}
	\sum_{z_1,z_2,x_3,x_4} {\rm Pr}(\bar{z}_1\bar{z}_2[z \oplus x_3 x_4 \oplus 1]=0|z_1,z_2) \leq 11	\,, 
\end{align}
where $\bar{a} = a\oplus 1$ denotes the classical not, since
\begin{align}
\sum_{x_3,x_4} {\rm Pr}(\bar{z}_1\bar{z}_2[z \oplus x_3 x_4 \oplus 1]=0|z_1,z_2) =4
\end{align}
whenever $\bar{z}_1\bar{z}_2=0$.

\begin{theorem}\label{thm:require_postselection}
	Any linear function of the measurement outcomes of an IQP Bell test is consistent with a local hidden variable theory.
\end{theorem}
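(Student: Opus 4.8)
The plan is to reduce the claim to the locality of a single full-correlation function and then verify that this correlator lies in the local polytope by estimating its Walsh--Hadamard transform. A linear function of the outcomes is $L(z)=\bigoplus_{m\in S}z_m$ for some $S\subseteq\{1,\dots,n\}$ (an additive constant is irrelevant), and its conditional distribution is fixed by the single correlator $E_S(x)=\langle\prod_{m\in S}(-1)^{z_m}\rangle_x$ through $\Pr(L=0\mid x)=\tfrac12\bigl(1+E_S(x)\bigr)$. It therefore suffices to produce a local hidden-variable model for the family $\{E_S(x)\}_x$, since each party in $S$ can then output its assigned bit and each party outside $S$ output $0$. Because $E_S$ depends only on the settings $x_S=(x_m)_{m\in S}$, I would regard it as a function on $\mathbb{Z}_2^S$ and use the standard fact that, for binary inputs and outputs, the deterministic local correlators are exactly the characters $\pm(-1)^{u\cdot x}$; hence the local polytope is the cross-polytope they span, and $E_S$ is local if and only if its Walsh--Hadamard transform $\hat E_S$ obeys $\sum_u\av{\hat E_S(u)}\le 1$.

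Next I would compute $\hat E_S$ from the IQP structure. Writing $U_n\ket{+}\tn n=2^{-n/2}\sum_s e^{i\theta_s}\ket{s}$ and $D_m=\proj{0}+e^{i\gamma_m}\proj{1}$, a short calculation gives $E_S(x)=2^{-n}\sum_s e^{i(\theta_s-\theta_{s\oplus e_S})}\prod_{m\in S}e^{i\gamma_m(2s_m-1)x_m}$ (with $e_S\in\mathbb{Z}_2^n$ the indicator of $S$), in which the dependence on each setting $x_m$ enters through a single-bit factor. Expanding each such factor in the basis $\{1,(-1)^{x_m}\}$ and collecting terms, one finds
\begin{align}
\av{\hat E_S(u)}=\av{P_u}\,\av{\hat g(u)},\qquad P_u=\prod_{m\in u}\sin\tfrac{\gamma_m}{2}\prod_{m\in S\setminus u}\cos\tfrac{\gamma_m}{2},
\end{align}
where $\hat g$ is the Walsh--Hadamard transform of $g_s=e^{i(\theta_s-\theta_{s\oplus e_S})}e^{i\Theta_S(s)}$, $\Theta_S(s)=\tfrac12\sum_{m\in S}\gamma_m(2s_m-1)$. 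The essential point is that $\av{g_s}=1$ for every $s$, which is nothing but the defining feature of an IQP state preparation: $U_n\ket{+}\tn n$ has all $2^n$ amplitudes of equal modulus $2^{-n/2}$.

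Finally I would bound the $\ell^1$ norm by Cauchy--Schwarz,
\begin{align}
\sum_{u\subseteq S}\av{\hat E_S(u)}=\sum_{u\subseteq S}\av{P_u}\av{\hat g(u)}\le\Bigl(\sum_{u\subseteq S}\av{P_u}^2\Bigr)^{1/2}\Bigl(\sum_{u}\av{\hat g(u)}^2\Bigr)^{1/2}.
\end{align}
The first factor equals $\prod_{m\in S}\bigl(\sin^2\tfrac{\gamma_m}{2}+\cos^2\tfrac{\gamma_m}{2}\bigr)^{1/2}=1$ by the Pythagorean identity, and the second equals $1$ by Parseval together with $\av{g_s}=1$. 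Hence $\sum_u\av{\hat E_S(u)}\le 1$, so $E_S$ is local and the theorem follows.

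The step I expect to be the main obstacle is the last bound. Discarding the phases $e^{i\theta_s}$ and applying the triangle inequality directly to $\hat E_S(u)$ yields only $\prod_{m\in S}\bigl(\av{\sin\tfrac{\gamma_m}{2}}+\av{\cos\tfrac{\gamma_m}{2}}\bigr)$, which can be as large as $2^{\av{S}/2}$ and is useless; genuine cancellation among the phases is required. The Cauchy--Schwarz/Parseval pairing is precisely what captures this cancellation, and it is the equal-modulus property of IQP-preparable states that makes Parseval return the constant $1$. Equivalently, one may argue state-by-state: the reduced state on $S$ is a uniform mixture of pure equal-modulus ``phase'' states, the computation shows that each such pure state already has a local full-correlation function under the equatorial measurements $D_m^{-x_m}\ket{\pm}$, and locality is preserved under the mixture by convexity.
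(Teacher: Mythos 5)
Your proof is correct, and while it shares the paper's overall skeleton --- reduce a linear function of the outcomes to a full-correlation function on the relevant subset $S$ of parties, then verify the WWZB criterion $\sum_u\av{\hat E_S(u)}\le 1$ --- the way you establish the WWZB bound is genuinely different. The paper first shows by an iterated qubit-by-qubit decomposition that every full correlator of an IQP Bell test is a convex combination of pure cosines $\cos(f\cdot x+c)$, reduces to that case by convexity of the WWZB functional, computes the Fourier transform of a single cosine in closed product form, and then collapses the $2^n$-term sum by repeatedly pairing coefficients $a,a'$ that differ in one bit via the identity $\av{\cos\theta\cos\phi}+\av{\sin\theta\sin\phi}=\av{\cos(\theta\pm\phi)}$. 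You instead expand the correlator directly in the amplitudes $2^{-n/2}e^{i\theta_s}$ of the IQP phase state, factor each Walsh coefficient as $\av{P_u}\av{\hat g(u)}$, and close the argument with Cauchy--Schwarz against Parseval, where the Pythagorean identity handles the measurement angles and the flatness $\av{g_s}=1$ of IQP amplitudes makes Parseval return exactly $1$. (I checked the factorization: the $s$-dependent signs $2s_m-1$ in the odd Fourier component contribute $(-1)^{u\cdot s}$ and a harmless $i^{\av{u}}$, so $\av{\hat E_S(u)}=\av{P_u}\av{\hat g(u)}$ does hold.) Your route is shorter, avoids the convex-decomposition and pairing machinery, and isolates cleanly the two structural features doing the work --- equal-modulus amplitudes of diagonally prepared states and equatorial measurements --- whereas the paper's intermediate form $E(x)=\sum p_{f,c}\cos(f\cdot x+c)$ is more explicit about what the correlators actually look like. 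Your closing remark that the naive triangle-inequality bound gives only $2^{\av{S}/2}$ and that genuine phase cancellation is needed is exactly right; the paper captures that same cancellation through its pairwise recombination of cosine terms rather than through Parseval.
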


\begin{proof}

Any linear function of the measurement outcomes of an $n$-qubit IQP Bell test can be written as the parity of a subset of $k$ measurement outcomes, which, without loss of generality, we set to be the last $k$ parties. The quantum correlation functions are
\begin{align}\label{eq:correlations}
 E(x) &= {\rm Pr}(z = 0|x) - {\rm Pr}(z = 1|x) \notag\\
 &=  \bra{+}^{\otimes n}U_n^{\dagger} D(x)^{\dagger} X[1_{n,k}] D(x) U_n \ket{+}^{\otimes n}	\,,
\end{align}
where $z = \bigoplus_{m=n-k+1}^{n}z_n$, $D(x) = \bigotimes_{m=1}^n D_m^{x_m}$, $1_{n,k}$ is the $n$-bit string whose last $k$ entries are 1 and the remainder are zero and $A[s]=\bigotimes_{m=1}^n A^{s_m}$.

First note that if $k< n$, we can write $U_n = \ket{0}\bra{0}\otimes V + 
\ket{1}\bra{1}\otimes W$ to obtain
\begin{align}
E(x) &= \frac{1}{2}\bra{+}^{\otimes n'}V^{\dagger} D(x)^{\dagger} X[1_{n'}] D(x) V \ket{+}^{\otimes n'}	\notag\\
&\quad + \frac{1}{2}\bra{+}^{\otimes n'}W^{\dagger} D(x)^{\dagger} X[1_{n'}] 
D(x) W \ket{+}^{\otimes n'}
\end{align}
where $n'=n-1$, that is, the correlation functions are convex combinations of correlation functions for $n'$-qubit IQP Bell tests. Iterating as necessary, we see that the correlation functions are equivalent to convex combinations of full correlation functions for $k$-qubit IQP Bell tests (that is, to correlation functions for the parity of the full set of $k$ outcomes). 

Therefore if the full correlation functions (i.e., when $n=k$) for all $n$-qubit IQP Bell tests are consistent with locally causal theories, then all linear functions of the measurement outcomes of $n$-qubit IQP Bell tests are consistent with a locally causal theory.

We now show that the full correlation functions for $n$-qubit IQP Bell tests take the simple form
\begin{align}\label{eq:simple_correlators}
E(x) = \sum p_{f,c}\cos(f\cdot x + c)
\end{align}
for some vectors $f\in\mathbb{R}^n$ and constant $c\in\mathbb{R}$. We can 
write any $n$-qubit diagonal gates $R(x)$ and $U(x)$ (where the $x$-dependence 
is as in the definition of an IQP Bell test) as 
\begin{align}
R(x) &= \ket{0}\bra{0}\otimes S(x') + e^{i g x_1}\ket{1}\bra{1}\otimes T(x') \notag\\
U(x) &= \ket{0}\bra{0}\otimes V(x') + e^{i h x_1}\ket{1}\bra{1}\otimes W(x')
\end{align}
for some $n'$-qubit unitaries $S(x')$, $T(x')$, $V(x')$ and $W(x')$ and real constants $g$ and $h$, where $n'=n-1$ as before and $x'$ is the $n'$-bit string obtained by deleting the first entry of $x$.

We then have
\begin{widetext}
\begin{align}
\bra{+}^{\otimes n} R(x)\ct X^{\otimes n}U(x)\ket{+}^{\otimes n} &= \frac{1}{2}e^{i h_1 x_1} \bra{+}^{\otimes n'} S(x')\ct X^{\otimes n'}W(x')\ket{+}^{\otimes n'} +\frac{1}{2}e^{-i g_1 x_1} \bra{+}^{\otimes n'} T(x')\ct X^{\otimes n'}V(x')\ket{+}^{\otimes n'}
\end{align}
\end{widetext}
The correlation functions defined in Eq.~\eqref{eq:correlations} are a special 
case of the above where $U(x)=R(x)$. Therefore iteratively applying the above 
argument and noting that all phases are linear functions of $x$ and appear in 
conjugate pairs (with no relative phase) gives 
Eq.~\eqref{eq:simple_correlators}.

Finally, the full correlation functions for an $n$-qubit IQP Bell test are consistent with a locally causal theory if and only if they satisfy the Werner-Wolf-\.Zukowski-Brukner (WWZB) inequality~\cite{Werner2001,Zukowski2002}
\begin{align}\label{eq:WWZB}
\mathcal{S}_{\rm WWZB}=\sum_{a\in\mathbb{Z}_2^n}\lvert \sum_{x\in\mathbb{Z}_2^n} (-1)^{a\cdot x}E(x) \rvert \leq 2^n.
\end{align}
Since this inequality is convex in $E(x)$ (as can be immediately seen by appealing to the triangle inequality), we need only prove that any correlation functions of the form
\begin{align}\label{eq:convex_correlation}
E(x)=\cos(f\cdot x + c) = \frac{1}{2}(e^{i f\cdot x  + i c} + e^{-i f\cdot x  - i c}) 
\end{align}
for arbitrary $f\in\mathbb{R}^n$ and $c\in\mathbb{R}$ will satisfy the WWZB 
inequality. Using trivial algebraic manipulations and the identity
\begin{align} 
\cos(a_m\pi/2 -f_m/2) = (-1)^{a_m}\cos(a_m\pi/2 + f_m/2)
\end{align}
for $a_m=0,1$, we find that for any fixed $a$ and correlation functions as in 
Eq.~\eqref{eq:convex_correlation},
\begin{align}
\sum_{x\in\mathbb{Z}_2^n} (-1)^{a\cdot x} E(x)&= 2^n \cos\left(c + \sum_m 
\frac{a_m\pi + f_m}{2}\right)\notag\\
&\quad \times\prod_{m=1}^n \cos\left(\frac{a_m\pi + f_m}{2}\right).
\end{align}

Therefore for any pairs of terms $a$ and $a'$ in Eq.~\eqref{eq:WWZB} such 
that $a_1 =0$, $a'_1 = 1$ and all other elements of $a$ and $a'$ coincide,
\begin{widetext}
\begin{align}
\lvert \sum_{x\in\mathbb{Z}_2^n} (-1)^{a\cdot x}E(x) \rvert + \lvert 
\sum_{x\in\mathbb{Z}_2^n} (-1)^{a'\cdot x}E(x) \rvert &= 2^n \left[\vert 
\cos\theta\cos\left(f_1/2\right) \vert + \vert 
\sin\theta\sin\left(f_1/2\right) \vert\right] \prod_{m=2}^n 
\vert\cos\left(\frac{a_m\pi + f_m}{2}\right)\vert	\notag\\
&= 2^n \vert 
\cos\left(\theta \pm f_1/2\right) \vert \prod_{m=2}^n 
\vert\cos\left(\frac{a_m\pi + f_m}{2}\right)\vert
\end{align}
\end{widetext}
where $\theta=c+f_1/2 +\sum_{m=2}^n (a_m\pi + f_m)/2$ and the relative sign is 
positive if the signs of $\cos\theta\cos\left(f_1/2\right)$ and 
$\sin\theta\sin\left(f_1/2\right)$ are different and negative if the signs are 
the same. Iterating this argument gives
\begin{align}
\mathcal{S}_{\rm WWZB} \leq 2^n
\end{align}
for any IQP Bell test, so the correlation functions are consistent with a 
locally causal theory.
\end{proof}

\section{Discussion} 

We have shown that families of IQP circuits do in fact exhibit nonlocality, 
and, consequently, the possibility of a quantum speedup using such circuits can 
be attributed to nonlocality. Crucially, this nonlocality is only revealed by 
the very operation used to prove that IQP circuits are hard to sample, namely 
post-selection~\cite{Bremner2011}, establishing a concrete connection between 
nonlocality and a computational advantage.

These results help us understand why IQP circuits, despite their apparent 
simplicity, cannot be efficiently simulated classically (unless the polynomial 
hierarchy collapses): the probability distributions produced by IQP circuits 
possess nonlocality that is hidden with respect to linear classical 
side-processing, analogous to ``hidden 
nonlocality''~\cite{Popescu1995,Hirsch2013}. This hidden nonlocality shows that 
the output distributions of IQP circuits possess a level of additional 
structure that is not evident until the results are analyzed in a nontrivial 
manner and yet must be reproduced in a classical simulation, making sampling 
IQP circuits more difficult than we might otherwise expect.

While our results have been derived by considering prepare-and-measure 
scenarios, they also have implications for dynamics. A model is dynamically 
local if there is a local hidden variable for each qubit, unitaries correspond 
to local stochastic maps, and measurement outcomes only depend on the hidden 
variable of the qubit being measured~\cite{pusey_masters}. Our results then 
imply that there is a form of ``dynamical nonlocality'' in IQP circuits, 
similar to that exhibited in the Aharanov-Bohm 
effect~\cite{Aharonov1959,Popescu2010}.

However, it is possible that a dynamically local model could correctly 
reproduce the statistics for a family of IQP circuits (while necessarily 
failing for other circuits). Such a dynamically local model for a family of 
circuits automatically provides an efficient simulation algorithm for the 
family of circuits, provided the model can be constructed and verified 
efficiently. Therefore a pressing open problem is to determine whether there 
exist no dynamically local models for specific families of IQP circuits that are
hard to simulate, or whether such models are simply hard to construct and 
verify.

\textit{Acknowledgements.} J.~J.~W. acknowledges helpful discussions with 
M.~Bremner, J.~Emerson, M.~Hoban, M.~Howard, R.~Raussendorf, E.~Rieffel, and 
H.~Wiseman. This research was supported by CIFAR, the Government of 
Ontario, and the Government of Canada through NSERC and Industry Canada.

\end{document}